\documentclass[11pt]{article}

\usepackage[T1]{fontenc}
\usepackage[utf8]{inputenc}
\usepackage{lmodern}
\usepackage{microtype}
\usepackage{amsmath,amssymb,amsthm,mathtools}
\usepackage{geometry}
\usepackage{hyperref}
\usepackage{enumitem}
\hypersetup{colorlinks=true,linkcolor=blue,citecolor=blue,urlcolor=blue}
\newtheorem{theorem}{Theorem}
\newtheorem{corollary}[theorem]{Corollary}
\newcommand{\T}{\mathcal T}
\newcommand{\one}{\mathbf 1}

\title{An exact Jordan signature in cumulative-count first-passage statistics}
\author{Lachlan Bridges\\
\small Independent researcher\\
\small \texttt{lachlanjbridges@gmail.com}}
\date{}

\begin{document}
\maketitle

\begin{abstract}
For cumulative-count first-passage problems, the Laplace transform of the time $\T_N$ of the $N$th count is naturally expressed through powers of a one-count kernel. A Jordan defect of that kernel does not, however, automatically survive the complete sum over terminal phases. We give an exact three-state Markov-renewal example in which it does. At $s=1$, the one-count kernel has spectrum $\{1/2,1/8,1/8\}$, with a one-dimensional eigenspace at $1/8$, and for initial phase $1$
\[
\mathbb E_1[e^{-\T_N}]
=\frac56\,2^{-N}+\frac{N+1}{6}\,8^{-N}.
\]
Thus the ordinary, unconditioned cumulative-count transform contains the explicit Jordan contribution $N/(6\,8^N)$ at every threshold $N\ge1$. For the same fixed stochastic process, the two eigenvalues meeting at $1/8$ unfold as
\[
\lambda_\pm(s)=\frac18\pm\frac1{96}\sqrt{s-1}
-\frac{131}{2304}(s-1)+O((s-1)^{3/2}),
\]
and the characteristic discriminant has a simple zero at $s=1$. We place the example beside two structures that suppress such a marginal signature: fixed post-count reset, which gives scalar renewal, and a common total holding rate, for which terminal summation scalarizes the matrix power. The model also admits a monitored-Lindblad realization, but the mechanism is entirely classical. The result provides a small exact benchmark for how generalized spectral modes can remain visible in finite-threshold first-passage statistics.
\end{abstract}

\section{Introduction}

First-passage questions for counting observables ask when an accumulated count reaches a prescribed threshold. If $\T_N$ is the time of the $N$th counted event, its Laplace transform is often most naturally organized count by count. In renewal systems this produces a scalar product of waiting-time transforms; in systems with several post-count phases it produces a matrix product. This viewpoint is standard in classical and quantum counting problems. Ptaszy\'nski \cite{ptaszynski2018} develops renewal and nonrenewal first-passage statistics, Bakewell-Smith et al. \cite{bakewell2025} use exact powered first-passage kernels for classical and quantum Markov processes, and Menczel et al. \cite{menczel2026} establish all-time relations between counting statistics and first-passage propagators. Menczel et al. also identify exceptional-point effects in first-passage statistics as a direction for investigation; the present note supplies a finite-threshold classical example in which a generalized mode survives terminal summation. Generalized Jordan chains also have a longer history in first-passage analysis of Markov additive processes \cite{dauria2010}.

Accordingly, the matrix-power identity itself is not the issue here. The question is more specific: \emph{when can a Jordan defect of a physical one-count kernel survive complete terminal summation and appear in the ordinary $N$th-count timing transform?} This distinction matters because the matrix $R_s^N$ can contain a genuine polynomial-times-exponential Jordan contribution while a particular preparation and readout annihilate it. In a cumulative-count problem the most natural readout is often exactly the terminal sum $\one$, so visibility of the generalized mode is an additional property, not a consequence of defectivity alone.

Two familiar structures make the cancellation transparent. If every counted jump resets to one fixed post-count state, the regenerative dynamics after the first count is scalar and the $N$th-count transform is a simple renewal product. If several reset phases are allowed but every phase has the same total holding rate $\gamma$, then the one-count kernel has the form $R_s=q(s)P$, with $P\one=\one$, so
\[
R_s^N\one=q(s)^N\one.
\]
Any subleading generalized eigenspace of $P$ is then removed by complete terminal summation, even though it may remain visible in terminal-resolved matrix elements. These observations make clear why an explicit marginal Jordan term is not automatic.

The example in this paper is chosen so that the cancellation does not occur. It is a three-state Markov-renewal process with phase-dependent exponential holding rates. At the distinguished Laplace point $s=1$, its one-count kernel $R_1$ has a simple eigenvalue $1/2$ and a length-two Jordan block at the subleading eigenvalue $1/8$. Starting from phase $1$, complete terminal summation gives the exact finite-threshold identity
\[
\mathbb E_1[e^{-\T_N}]
=\frac56\,2^{-N}+\frac16\,8^{-N}
+\frac{N}{6}\,8^{-N}.
\]
The last term is therefore not a terminal-conditioned effect and not an asymptotic inference: it is present explicitly in the ordinary transform for every $N\ge1$.

The same fixed model also gives a clean local analytic picture. As the Laplace variable $s$ is varied around $1$, the repeated eigenvalue $1/8$ splits in two square-root branches with leading coefficient $\pm1/96$, and the cubic characteristic discriminant has a simple zero. This provides a self-contained transform-domain defective branch point attached to the same process that exhibits the marginal Jordan term. Here $s$ is only the Laplace dual variable. Varying $s$ changes the transform weighting of one fixed stochastic dynamics; it does not tune a physical transition rate or Hamiltonian.

The paper is deliberately focused on this exact phenomenon. Section~2 defines the Markov-renewal model and its one-count Laplace kernel. Section~3 proves the defective spectrum and the terminal-summed transform and then isolates the simple algebraic reason the generalized contribution survives in this example. Section~4 gives the square-root unfolding. Section~5 compares fixed reset, common total rate, and the present state-dependent-rate construction. Section~6 embeds the process in monitored Lindblad dynamics and makes explicit that the mechanism is classical. Section~7 summarizes the resulting benchmark and its limitations. No claim is made that powered first-passage kernels, Jordan powers, or generalized Jordan-chain methods are new, and no time-domain singularity or intrinsically quantum mechanism is inferred from the transform-domain branch point.

\section{A three-state Markov-renewal kernel}

We work in time units for which the distinguished Laplace point is $s_*=1$. The post-count phase takes values in $\{1,2,3\}$. Conditional on the current phase $i$, the waiting time until the next counted event is exponential with rate $\gamma_i$; when that event occurs, the new phase is $j$ with probability $P_{ij}$. Thus the model is a finite-state Markov-renewal process whose embedded chain is $P$ and whose holding law depends on the departure phase.

Take
\[
P=
\begin{pmatrix}
13/22&7/22&1/11\\
1/4&11/20&1/5\\
1/3&1/3&1/3
\end{pmatrix},
\qquad
(\gamma_1,\gamma_2,\gamma_3)=\left(\frac{11}{13},\frac57,3\right).
\]
Each row of $P$ sums to one and each $\gamma_i$ is positive. If $\tau_n$ is the $n$th inter-count time and $X_n$ is the phase immediately after the $n$th count, then
\[
\Pr(\tau_n\in dt,\ X_n=j\mid X_{n-1}=i)
=\gamma_iP_{ij}e^{-\gamma_i t}\,dt.
\]
Counted events include self-transitions $i\to i$: such an event increments the count even though the post-count phase remains $i$. Thus $\T_N$ is the time of the $N$th counted event, not necessarily the time of the $N$th change of phase. The physical $N$th-count time is
\[
\T_N=\tau_1+\cdots+\tau_N.
\]

Laplace transforming one count interval gives
\[
(R_s)_{ij}
=\int_0^\infty e^{-st}\gamma_iP_{ij}e^{-\gamma_i t}\,dt
=\frac{\gamma_i}{\gamma_i+s}P_{ij},
\]
and hence
\begin{equation}
\label{eq:kernel}
R_s=\operatorname{diag}\!\left(
\frac{\gamma_1}{\gamma_1+s},
\frac{\gamma_2}{\gamma_2+s},
\frac{\gamma_3}{\gamma_3+s}
\right)P,
\qquad \Re s>-\frac57.
\end{equation}
For real $s\ge0$, $R_s$ is nonnegative; it is stochastic at $s=0$ and strictly substochastic for $s>0$. The unequal factors $\gamma_i/(\gamma_i+s)$ are the only structural departure from a common-rate reset chain that will matter below.

Let $e_i$ be the $i$th coordinate vector and let $\one=(1,1,1)^T$. Repeated conditioning at count times gives the standard Markov-renewal product. After summing over the terminal phase,
\begin{equation}
\label{eq:terminal-sum}
\boxed{\;
\mathbb E_i[e^{-s\T_N}]
=e_i^TR_s^N\one.
\;}
\end{equation}
The right vector $\one$ is therefore not an arbitrary test vector in the present problem: it represents the physically ordinary choice in which the terminal reset phase is not observed. The question addressed below is precisely whether the generalized eigenspace of $R_s$ survives this natural summation.

For later comparison, note that
\[
R_s\one=
\begin{pmatrix}
\gamma_1/(\gamma_1+s)\\
\gamma_2/(\gamma_2+s)\\
\gamma_3/(\gamma_3+s)
\end{pmatrix}.
\]
With a common holding rate these three entries would coincide and $\one$ would be a right eigenvector for every $s$. In the present model they are generally distinct. This fact alone does not guarantee a visible Jordan term, but it removes the automatic scalarization that occurs in the common-rate case.

\section{Exact terminal-summed Jordan signature}

\begin{theorem}
At $s=1$,
\[
R_1=
\begin{pmatrix}
13/48&7/48&1/24\\
5/48&11/48&1/12\\
1/4&1/4&1/4
\end{pmatrix}
=:R_*.
\]
Its characteristic polynomial is
\[
\det(\lambda I-R_*)
=\left(\lambda-\frac12\right)
 \left(\lambda-\frac18\right)^2,
\]
and the eigenspace at $1/8$ is one-dimensional. Hence $R_*$ has a length-two Jordan block at the subleading eigenvalue $1/8$.

For initial phase $1$ and every $N\ge1$,
\begin{equation}
\label{eq:main-transform}
\boxed{\;
\mathbb E_1[e^{-\T_N}]
=e_1^TR_*^N\one
=\frac56\,2^{-N}+\frac{N+1}{6}\,8^{-N}.
\;}
\end{equation}
In particular, the ordinary terminal-summed first-passage transform contains the nonzero Jordan contribution
\[
\frac{N}{6}\,8^{-N}.
\]
\end{theorem}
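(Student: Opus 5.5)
The proof is a finite exact computation, carried out in four steps.

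\textbf{Step 1: the matrix $R_*$.} Substitute $s=1$ into \eqref{eq:kernel}. The diagonal factors are $\gamma_i/(\gamma_i+1)=(11/24,\,5/12,\,3/4)$. Multiplying row $i$ of $P$ by the $i$th factor gives the displayed $R_*$; for instance, $(11/24)(13/22)=13/48$.

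\textbf{Step 2: the characteristic polynomial.} Compute it from the invariants of $R_*$ rather than by expanding the determinant directly. The trace is $13/48+11/48+12/48=3/4$, which matches $1/2+2/8$. Next compute the sum of principal $2\times2$ minors, which should be $1/8+1/8+1/64=9/64$. Then compute $\det R_*$, which should be $1/2\cdot 1/64=1/128$. Since the determinant factors through $\operatorname{diag}$, $\det R_*=(11/24)(5/12)(3/4)\det P$, so it reduces to $\det P$.

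\textbf{Step 3: the eigenspace at $1/8$ and the Jordan block.} To show the eigenspace is one-dimensional, exhibit a nonzero $2\times2$ minor of $R_*-\tfrac18 I$, so that its rank is $2$. For example, the upper-left minor of
\[
\begin{pmatrix}13/48-6/48&7/48\\5/48&11/48-6/48\end{pmatrix}
\]
equals $(35-35)/48^2=0$, so that one fails. I would therefore try a minor involving row $3$, such as rows $2,3$ and columns $1,3$: $(5/48)(1/8)-(1/12)(1/4)\neq0$. With algebraic multiplicity $2$ and geometric multiplicity $1$, the Jordan block at $1/8$ has length two.

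\textbf{Step 4: the transform \eqref{eq:main-transform}.} Avoid computing the Jordan basis. By Cayley--Hamilton, $u_N:=e_1^TR_*^N\one$ satisfies the linear recurrence whose characteristic polynomial is $(\lambda-\tfrac12)(\lambda-\tfrac18)^2$. Hence $u_N=a2^{-N}+(b+cN)8^{-N}$ for all $N\ge0$. This form is valid from $N=0$ because the recurrence has order $3$ and the minimal polynomial divides the characteristic polynomial.

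It remains to determine $a,b,c$ from three initial values:
\[
u_0=1,\qquad u_1=e_1^TR_*\one=\tfrac{13+7+2}{48}=\tfrac{11}{24},\qquad u_2=e_1^TR_*(R_*\one).
\]
Here $R_*\one=(11/24,5/12,3/4)^T$, so
\[
u_2=\tfrac{13}{48}\cdot\tfrac{11}{24}+\tfrac{7}{48}\cdot\tfrac{5}{12}+\tfrac{1}{24}\cdot\tfrac34=\tfrac{143+70+36}{1152}=\tfrac{249}{1152}.
\]
Now check against the claimed values $a=5/6$, $b=c=1/6$:
\begin{itemize}[nosep]
\item $N=0$: $5/6+1/6=1$.
\item $N=1$: $5/12+2/48=22/48=11/24$.
\item $N=2$: $5/24+3/384=(80+3)/384=83/384=249/1152$.
\end{itemize}
All three match. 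Since the $3\times3$ system for $a,b,c$ is nonsingular (its coefficient matrix is a confluent Vandermonde matrix with distinct nodes $1/2$ and $1/8$), these values are forced, and the formula holds for every $N$. The Jordan term $N8^{-N}/6$ is visible because $c=1/6\neq0$.

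\textbf{Main obstacle.} The only delicate points are arithmetic: the $2\times2$ principal minors and $\det P$ in Step 2, and choosing a nonvanishing minor in Step 3. The nonzero value of $c$ then follows directly from the three-term fit rather than needing a separate argument.
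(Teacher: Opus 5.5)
Your proposal is correct and follows the paper's proof essentially step for step: the same row scaling to get $R_*$, a rank-two check on $R_*-\tfrac18 I$, and the same fit of $a\,2^{-N}+(b+cN)\,8^{-N}$ to $u_0=1$, $u_1=11/24$, $u_2=83/384$. Your invariants also check out: the principal $2\times2$ minors of $48R_*$ sum to $324$, giving $9/64$, and $\det P=3/55$ gives $\det R_*=1/128$. The remaining differences are cosmetic. You read off the characteristic polynomial from trace, minors and determinant rather than expanding it. You certify rank via a nonzero minor rather than the paper's repeated-column observation; rank $\le 2$ is automatic since $1/8$ is an eigenvalue. And you verify the coefficients and invoke uniqueness instead of solving the linear system.
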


\begin{proof}
At $s=1$ the three row factors are
\[
\frac{\gamma_1}{\gamma_1+1}=\frac{11}{24},
\qquad
\frac{\gamma_2}{\gamma_2+1}=\frac5{12},
\qquad
\frac{\gamma_3}{\gamma_3+1}=\frac34.
\]
Multiplying these factors into the corresponding rows of $P$ gives the displayed matrix $R_*$ entry by entry.

A direct determinant expansion gives
\[
\det(\lambda I-R_*)
=\lambda^3-\frac34\lambda^2+\frac9{64}\lambda-\frac1{128}
=\left(\lambda-\frac12\right)\left(\lambda-\frac18\right)^2.
\]
Moreover,
\[
48\left(R_*-\frac18I\right)
=
\begin{pmatrix}
7&7&2\\
5&5&4\\
12&12&6
\end{pmatrix}.
\]
The first and third columns are linearly independent, while the first two coincide, so this matrix has rank $2$. Therefore
\[
\dim\ker\!\left(R_*-\frac18I\right)=1.
\]
For an explicit Jordan chain,
\[
v=(-1,1,0)^T,\qquad
w=(-16,0,32)^T
\]
satisfy
\[
\left(R_*-\frac18I\right)v=0,
\qquad
\left(R_*-\frac18I\right)w=v.
\]
Thus the repeated eigenvalue $1/8$ is genuinely defective.

Set
\[
a_N=e_1^TR_*^N\one.
\]
Since the minimal polynomial is
\[
\left(x-\frac12\right)\left(x-\frac18\right)^2,
\]
the scalar sequence has the form
\[
a_N=A\,2^{-N}+(B+CN)\,8^{-N}.
\]
Three exact values determine the coefficients. First $a_0=1$. Next, the first row sum of $R_*$ gives
\[
a_1=\frac{11}{24}.
\]
Also
\[
R_*\one=
\begin{pmatrix}
11/24\\[2pt]5/12\\[2pt]3/4
\end{pmatrix},
\]
and hence
\[
a_2
=\frac{13}{48}\frac{11}{24}
+\frac7{48}\frac5{12}
+\frac1{24}\frac34
=\frac{83}{384}.
\]
Solving
\[
A+B=1,
\qquad
\frac A2+\frac{B+C}{8}=\frac{11}{24},
\qquad
\frac A4+\frac{B+2C}{64}=\frac{83}{384}
\]
gives
\[
A=\frac56,\qquad B=C=\frac16.
\]
Therefore
\[
a_N=\frac56\,2^{-N}+\frac{N+1}{6}\,8^{-N}.
\]
Finally,
\[
\frac{N+1}{6}\,8^{-N}
=\frac16\,8^{-N}+\frac{N}{6}\,8^{-N},
\]
so the coefficient multiplying the nilpotent Jordan contribution is nonzero after complete terminal summation.
\end{proof}

\subsection*{Why terminal summation does not cancel the generalized mode}

The theorem can be read directly at the level of the physical observable $e_1^TR_*^N\one$. A length-two Jordan block permits a contribution proportional to $N(1/8)^N$, but whether it appears in this scalar quantity depends on the preparation $e_1$ and the terminal vector $\one$. Here the exact three-value calculation above gives $C=1/6$, so the generalized component has nonzero overlap with both ends of the observable.

There is also a useful structural way to see why cancellation is not forced. At $s=1$,
\[
R_*\one=
\begin{pmatrix}
11/24\\[2pt]5/12\\[2pt]3/4
\end{pmatrix},
\]
which is not proportional to $\one$. Complete terminal summation therefore does not collapse all powers $R_*^N\one$ onto a single scalar mode. In the common-total-rate comparison of Section~5, by contrast, $\one$ is a right eigenvector for every $s$, and this collapse occurs identically. The present calculation makes only the example-specific conclusion needed here: for this state-dependent-rate kernel the generalized $1/8$ mode survives the natural terminal sum with coefficient $1/6$.

The result is an exact finite-threshold statement rather than a large-$N$ approximation. The Perron contribution $(5/6)2^{-N}$ dominates asymptotically, as expected for a positive kernel, while the Jordan term is subleading. Nevertheless, its polynomial prefactor is present for every $N$, and its coefficient can be checked directly from the exact transform values. This separation between asymptotic dominance and exact finite-$N$ visibility is one reason the example is useful as a benchmark.

\section{Local square-root unfolding}

The Jordan identity at $s=1$ is already enough for the finite-threshold statement. It is nevertheless useful to examine how the repeated eigenvalue sits inside the analytic family $R_s$. Because $R_s$ is obtained by Laplace transforming one fixed process, this is a local spectral question in the transform variable, not a perturbation of the underlying dynamics.

\begin{corollary}
Let $t=s-1$. The two eigenvalue branches of $R_s$ that meet at $1/8$ satisfy
\begin{equation}
\label{eq:puiseux}
\boxed{\;
\lambda_\pm(s)
=\frac18\pm\frac1{96}t^{1/2}
-\frac{131}{2304}t
+O(t^{3/2}).
\;}
\end{equation}
The characteristic discriminant has a simple zero at $s=1$. Thus $(s,\lambda)=(1,1/8)$ is a simple ramification point of the spectral curve over the $s$-plane, and the two eigenvalue branches have an order-two square-root unfolding there.
\end{corollary}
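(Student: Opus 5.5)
The plan is to work with the characteristic polynomial $F(s,\lambda)=\det(\lambda I-R_s)$ as a polynomial in $\lambda$ whose coefficients are rational functions of $s$, holomorphic for $\Re s>-5/7$. From the Theorem we already know $F(1,\lambda)=(\lambda-\tfrac12)(\lambda-\tfrac18)^2$. Hence $\lambda=1/8$ is a double root of $F(1,\cdot)$, and
\[
F(1,\tfrac18)=0,\qquad \partial_\lambda F(1,\tfrac18)=0,\qquad \partial_\lambda^2F(1,\tfrac18)=2\bigl(\tfrac18-\tfrac12\bigr)=-\tfrac34\neq0 .
\]
To expand the coefficients, write $R_s=D(s)P$, so that $\det(\lambda I-R_s)=\lambda^3-c_1(s)\lambda^2+c_2(s)\lambda-c_3(s)$. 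Here:
\begin{itemize}
\item $c_1=\operatorname{tr}(D P)=\sum_i d_i P_{ii}$;
\item $c_2=\sum_{i<j} d_id_j\,M_{ij}$, where $M_{ij}$ is the principal $2\times2$ minor of $P$ on $\{i,j\}$;
\item $c_3=d_1d_2d_3\det P$.
\end{itemize}
Each factor satisfies $d_i(1+t)=\gamma_i/(\gamma_i+1+t)$. Expanding to second order in $t$ gives $d_i(1+t)=d_i^*-\,t\,d_i^*/(\gamma_i+1)+t^2 d_i^*/(\gamma_i+1)^2+\dots$, with $d^*=(11/24,5/12,3/4)$. Substituting these yields $F(1+t,\tfrac18+\mu)$ as a polynomial in $(t,\mu)$.

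The key quantity is $F_t:=\partial_tF(1,\tfrac18)$. Alternatively, it follows from the Jacobi formula: $F_t=-\operatorname{tr}\bigl(\operatorname{adj}(\tfrac18I-R_*)\,R'_1\bigr)$ with $R'_1=D'(1)P$. The adjugate has rank one, $\operatorname{adj}(\tfrac18 I-R_*)=\kappa\, v u^T$, where $v=(-1,1,0)^T$ is the right eigenvector from the Theorem and $u$ is the left eigenvector of $R_*$ at $1/8$. So $F_t=-\kappa\,u^TD'(1)Pv$, which I would compute by hand.

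With this, the Newton polygon of $F(1+t,\tfrac18+\mu)=F_t\,t+\tfrac12F_{\lambda\lambda}\mu^2+F_{t\lambda}t\mu+\tfrac12F_{tt}t^2+\tfrac16F_{\lambda\lambda\lambda}\mu^3+\dots$ has the edge $\{t,\mu^2\}$. Provided $F_t\neq0$, this gives
\[
\mu=\pm\sqrt{-2F_t/F_{\lambda\lambda}}\;t^{1/2}+\beta t+O(t^{3/2}).
\]
One should get $-2F_t/F_{\lambda\lambda}=1/96^2$, i.e. $F_t=\tfrac38\cdot 96^{-2}$. For the linear coefficient $\beta$, substitute $\mu=\alpha t^{1/2}+\beta t$ and collect the $t^{3/2}$ terms:
\[
F_{\lambda\lambda}\alpha\beta+F_{t\lambda}\alpha+\tfrac16F_{\lambda\lambda\lambda}\alpha^3=0.
\]
Since $\alpha\neq0$, this gives $\beta=-(F_{t\lambda}+\alpha^2)/F_{\lambda\lambda}$, using $F_{\lambda\lambda\lambda}=6$. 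The check is $\beta=-131/2304$. Here $F_{t\lambda}=\partial_t(3\lambda^2-2c_1\lambda+c_2)$ evaluated at $(1,\tfrac18)$ equals $-\tfrac14c_1'(1)+c_2'(1)$. The $O(t^{3/2})$ remainder follows from the standard Puiseux/implicit function argument. Setting $\mu=t^{1/2}\nu$ and dividing by $t$ gives a function analytic in $(t^{1/2},\nu)$ with a simple root in $\nu$, so $\nu$ is analytic in $t^{1/2}$. The third eigenvalue stays near $1/2$, which is a simple root, so the pairing of branches is unambiguous.

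For the discriminant, $\Delta(s)=\prod_{i<j}(\lambda_i-\lambda_j)^2$. The factors $(\lambda_\pm-\tfrac12)^2$ are nonzero analytic near $t=0$, and $(\lambda_+-\lambda_-)^2=4\alpha^2t+O(t^2)$. So $\Delta(1+t)=\bigl(\tfrac38\bigr)^4\,4\alpha^2\,t+O(t^2)$, a simple zero. Equivalently, $\Delta'(1)\neq0$ is the same as $F_t\neq0$ given $F_{\lambda\lambda}\neq0$, and this is precisely the simple-ramification condition: the spectral curve $F=0$ is smooth at $(1,\tfrac18)$ with vertical tangent, and it projects with local degree two.

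The main obstacle is purely arithmetic. I need to compute $c_1'(1)$, $c_2'(1)$, $c_3'(1)$ exactly, and with them $F_t$ and $F_{t\lambda}$, then verify the specific values $1/96$ and $-131/2304$. I would do this by first computing the minors $M_{ij}$ and $\det P$ of $P$ and the derivatives $d_i'(1)=-d_i^*/(\gamma_i+1)$, which are $-\tfrac{143}{576},-\tfrac{35}{144},-\tfrac{3}{16}$. I would cross-check $F_t$ by the independent adjugate formula $-\kappa\,u^TD'(1)Pv$.
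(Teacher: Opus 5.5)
Your proposal is correct and follows essentially the paper's route. You Taylor-expand the characteristic polynomial at $(1,\tfrac18)$, use the $\{t,\mu^2\}$ Newton-polygon edge made rigorous by the rescaling $\mu=t^{1/2}\nu$ and the implicit function theorem, and match coefficients at order $t^{3/2}$. The differences are only bookkeeping: you use the monic polynomial with coefficients built from principal minors of $P$ rather than the cleared cubic, and you read the simple discriminant zero off $(\lambda_+-\lambda_-)^2$ rather than factoring $\Delta(s)=(s-1)G(s)$. The arithmetic you deferred does close:
\[
c_1'(1)=-\tfrac{395}{1152},\qquad c_2'(1)=-\tfrac{37}{288},\qquad c_3'(1)=-\tfrac{11}{1024},
\]
which give $F_t=\tfrac{3}{73728}=\tfrac38\cdot96^{-2}$ and $F_{t\lambda}=-\tfrac{197}{4608}$. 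These reproduce $\alpha=\pm\tfrac1{96}$ and $\beta=-\tfrac{131}{2304}$, and after multiplying by $4608$ they give the paper's cleared coefficients $\tfrac3{16}$ and $-197$. Your discriminant slope, $4608^4(\tfrac38)^4\cdot4\alpha^2=2^{16}3^{10}$, equals the paper's $G(1)=3869835264$.
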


\begin{proof}
Clearing the nonzero denominators of $\det(\lambda I-R_s)=0$ gives
\begin{equation}
\label{eq:cleared-cubic}
\begin{aligned}
F(s,\lambda)
={}&4(s+3)(7s+5)(13s+11)\lambda^3\\
&-(689s^2+1794s+973)\lambda^2\\
&+(299s+349)\lambda-36.
\end{aligned}
\end{equation}
At $s=1$,
\[
F(1,\lambda)=36(2\lambda-1)(8\lambda-1)^2.
\]
Write
\[
\lambda=\frac18+\mu,\qquad s=1+t.
\]
To establish the existence of the local branches before computing their coefficients, set
\[
t=z^2,\qquad \mu=zu,
\]
and define, using the removable analytic extension at $z=0$,
\[
H(z,u)=\frac{F(1+z^2,\,1/8+zu)}{z^2}.
\]
The Taylor expansion of $F$ at $(1,1/8)$ gives
\[
H(0,u)=\frac3{16}-1728u^2.
\]
Its two roots are $u_0=\pm1/96$, and
\[
\partial_uH(0,u_0)=-3456u_0=\mp36\ne0.
\]
The analytic implicit-function theorem therefore gives two analytic functions $u_\pm(z)$ near $z=0$, with $u_\pm(0)=\pm1/96$ and $H(z,u_\pm(z))=0$. Hence
\[
\lambda_\pm(s)=\frac18+z\,u_\pm(z),\qquad z^2=s-1,
\]
are genuine local Puiseux branches.

To determine the next coefficient, the terms needed through order $t^{3/2}$ are
\[
F(1+t,\tfrac18+\mu)
=\frac3{16}t-1728\mu^2-197t\mu+4608\mu^3
+O(t^2,t\mu^2).
\]
Seeking
\[
\mu=a\,t^{1/2}+b\,t+O(t^{3/2})
\]
and matching the coefficients of $t$ and $t^{3/2}$ gives
\[
\frac3{16}-1728a^2=0,
\]
and
\[
-3456ab-197a+4608a^3=0.
\]
Hence
\[
a=\pm\frac1{96},
\qquad
b=-\frac{131}{2304},
\]
which proves the stated branches.

For completeness, the discriminant $\Delta(s)$ of the cleared cubic factors as
\[
\Delta(s)=(s-1)G(s),
\]
with
\[
G(1)=3869835264\neq0.
\]
The discriminant zero at $s=1$ is therefore simple. For real $s>1$ sufficiently close to $1$, the leading square-root term is real and the two nearby eigenvalues are distinct and real. For real $s<1$ sufficiently close to $1$, they form a complex-conjugate pair.
\end{proof}

Nothing in this calculation requires interpreting $s$ as a tunable physical parameter. The matrices $R_s$ are Laplace transforms of the same fixed inter-count law. For every fixed threshold $N$ and initial phase $i$, the quantity
\[
\phi_{N,i}(s):=e_i^TR_s^N\one
\]
is a rational function of $s$, with possible poles only at $-11/13$, $-5/7$, and $-3$. Hence both $R_s$ and the finite-threshold transform $\phi_{N,i}(s)$ are analytic in a neighbourhood of $s=1$. The square-root branching in \eqref{eq:puiseux} concerns the individual eigenvalue parametrizations, not a singularity of $R_s$ or of the finite-threshold first-passage transform. In particular, it does not by itself assert a singularity of the time-domain density, an inverse-Laplace asymptotic governed by $s=1$, or robustness under arbitrary perturbations of the stochastic model. Those are separate questions.

\section{Structural comparison}

The terminal-summed Jordan term becomes clearer when placed beside two simpler reset structures. The comparison is not intended as a classification theorem. It isolates three concrete mechanisms: scalar renewal after a fixed reset; automatic terminal-sum scalarization under a common total rate; and the present example, where phase-dependent Laplace row factors leave room for a generalized mode to contribute to the terminal sum.

\paragraph{Fixed post-count reset: scalar ladder.}
Suppose every counted jump resets the system to one fixed state $\sigma$. After the first count there is only one regenerative phase. If $w_{X_0}(s)$ is the first waiting-time transform from the initial state and $w(s)$ is the common transform after reset, then
\[
\mathbb E_{X_0}[e^{-s\T_N}]
=w_{X_0}(s)\,w(s)^{N-1}.
\]
Equivalently, the canonical post-count ladder is the $1\times1$ matrix $[w(s)]$. A nontrivial factor $N\lambda^N$ therefore cannot arise from defectivity of this cumulative-count ladder. This is simply the spectral form of scalar renewal, not a new renewal theorem.

\paragraph{Several reset phases with one common total rate: terminal summation scalarizes.}
Let the post-count phase evolve according to a row-stochastic matrix $P$, but let every phase have the same exponential holding rate $\gamma$. Then
\[
R_s=q(s)P,\qquad q(s)=\frac{\gamma}{\gamma+s},
\qquad P\one=\one.
\]
Consequently
\[
R_s^N\one=q(s)^NP^N\one=q(s)^N\one,
\]
and for every initial phase $i$,
\[
e_i^TR_s^N\one=q(s)^N.
\]
Thus a defective subleading mode of $P$ may be visible in a terminal-resolved matrix element of $R_s^N$, but complete terminal summation removes it. The cancellation is exact because $\one$ is a right eigenvector of $R_s$.

\paragraph{The present state-dependent-rate example: nonzero generalized overlap.}
Here
\[
R_s=D_sP,\qquad
D_s=\operatorname{diag}\!\left(
\frac{\gamma_1}{\gamma_1+s},
\frac{\gamma_2}{\gamma_2+s},
\frac{\gamma_3}{\gamma_3+s}\right),
\]
with unequal diagonal entries. At $s=1$, the row sums are
\[
\frac{11}{24},\qquad \frac5{12},\qquad \frac34,
\]
so $\one$ is not a right eigenvector of $R_*$. The exact computation of Section~3 then shows that the generalized $1/8$ mode has nonzero preparation/readout overlap and contributes
\[
\frac{N}{6}\,8^{-N}
\]
to the ordinary terminal-summed transform. The point is not that state-dependent rates are the only possible route to such visibility; rather, in this explicit model they remove the common-rate scalarization and yield a nonzero generalized-mode coefficient.

These three cases distinguish defectivity from observability. A matrix may possess a Jordan block, yet the cumulative-count statistic can remain purely scalar because of the reset structure or because the terminal vector projects away the subleading modes. Conversely, Theorem~1 shows that complete terminal summation need not erase the polynomial prefactor. For applications, this suggests checking the actual prepared-and-summed quantity $e_i^TR_s^N\one$, rather than inferring an observable signature from the spectrum of $R_s$ alone.

\section{Monitored-Lindblad realization}

Although the construction has been presented as a classical Markov-renewal process, it also sits inside the standard monitored-Lindblad framework for counted quantum jumps. This realization is useful because it shows that the exact first-passage kernel and its marginal Jordan term are compatible with a legitimate open-system counting model, while at the same time making clear that no quantum coherence is responsible for the effect.

\begin{corollary}[Monitored-Lindblad realization]
Let $\{|1\rangle,|2\rangle,|3\rangle\}$ be an orthonormal basis, set $H=0$, and monitor every jump operator
\[
L_{ji}=\sqrt{\gamma_iP_{ij}}\,|j\rangle\langle i|,
\qquad i,j\in\{1,2,3\}.
\]
All nine channels are monitored and counted, including the diagonal channels $L_{ii}$: such an event increments the count even though a trajectory already in $|i\rangle$ remains in the same normalized state after the event. Starting immediately after a monitored jump in state $|i\rangle$, the next monitored jump occurs after an exponential waiting time of rate $\gamma_i$ and resets the system to $|j\rangle$ with probability $P_{ij}$. Consequently the count-to-count Laplace kernel is exactly the matrix $R_s$ of Section~2; in particular, for initial state $|1\rangle$ and $s=1$, the terminal-summed $N$th-count transform is the expression in Theorem~1.
\end{corollary}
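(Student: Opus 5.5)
The plan is to compute the no-jump evolution and the jump superoperator of the monitored Lindblad equation explicitly, and show that on the classical (diagonal) sector they reproduce the Markov-renewal law of Section~2. Then Theorem~1 and \eqref{eq:terminal-sum} apply verbatim.

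First I would write the effective non-Hermitian Hamiltonian. With $H=0$ it is $H_{\rm eff}=-\tfrac{i}{2}\sum_{i,j}L_{ji}^\dagger L_{ji}$. Since $L_{ji}^\dagger L_{ji}=\gamma_iP_{ij}|i\rangle\langle i|$ and each row of $P$ sums to one, this gives
\[
\sum_{i,j}L_{ji}^\dagger L_{ji}=\sum_i\gamma_i|i\rangle\langle i|=:\Gamma.
\]
So the conditional no-jump propagator is $e^{-\Gamma t/2}$, which is diagonal. A state $|i\rangle\langle i|$ prepared right after a jump stays $|i\rangle\langle i|$ between jumps. Its unnormalized weight decays as $e^{-\gamma_i t}$, so the survival probability without any monitored jump is $e^{-\gamma_i t}$ and the waiting time is exponential with rate $\gamma_i$.

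Next I would compute the jump-resolved density. The probability density that the first monitored jump happens at time $t$ through channel $L_{ji}'$ is
\[
\operatorname{tr}\!\bigl(L_{j'i'}e^{-\Gamma t/2}|i\rangle\langle i|e^{-\Gamma t/2}L_{j'i'}^\dagger\bigr).
\]
This equals $\delta_{i'i}\gamma_iP_{ij'}e^{-\gamma_i t}$, and the post-jump normalized state is $|j'\rangle\langle j'|$. Summing over channels that lead to the same $j$ gives exactly $\gamma_iP_{ij}e^{-\gamma_it}\,dt$. Because every channel is counted, this includes the self-channels $L_{ii}$, in agreement with the convention of Section~2. I would note that the diagonal channels are precisely why counting all nine is needed.

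Finally, the quantum-jump unravelling is a Markov process in the post-jump states, and those states are always basis projectors. The sequence of (inter-jump time, post-jump state) pairs is therefore a Markov-renewal process with the same semi-Markov kernel. Laplace transforming gives $R_s$, and repeated conditioning gives \eqref{eq:terminal-sum}; at $s=1$ with initial state $|1\rangle$, Theorem~1 yields the transform. The only real obstacle is bookkeeping: checking that no coherences are ever generated, which follows because $\Gamma$ is diagonal and every $L_{ji}$ maps basis projectors to basis projectors. The rest is direct computation.
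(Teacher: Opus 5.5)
Your proposal is correct and follows essentially the same route as the paper: row-stochasticity of $P$ makes $\sum_{i,j}L_{ji}^\dagger L_{ji}=\sum_i\gamma_i|i\rangle\langle i|$ diagonal, which gives survival $e^{-\gamma_i t}$ from $|i\rangle$, a channel density $\gamma_iP_{ij}e^{-\gamma_i t}$ with post-jump state $|j\rangle$, and then the Laplace transform and the Markov-renewal product. Your explicit effective-Hamiltonian and trace computations, and your remark that no coherences are ever generated, simply spell out steps the paper states tersely.
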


\begin{proof}
Because $P$ is row stochastic,
\[
\sum_j L_{ji}^\dagger L_{ji}
=\gamma_i|i\rangle\langle i|.
\]
With $H=0$, a trajectory starting from $|i\rangle$ has no-count survival probability $e^{-\gamma_i t}$. The probability density that the next monitored jump occurs at time $t$ through channel $i\to j$ is therefore
\[
\gamma_iP_{ij}e^{-\gamma_i t}.
\]
After that jump the normalized state is exactly $|j\rangle$. Laplace transformation gives
\[
(R_s)_{ij}=\frac{\gamma_i}{\gamma_i+s}P_{ij},
\]
and repeated monitored jumps reproduce the Markov-renewal product of Section~2.
\end{proof}

The mechanism is nevertheless classical. After every monitored event the trajectory lies in one of the three reset-basis states, and there is no Hamiltonian evolution mixing those states. The phase dependence enters only through the classical holding rates and reset probabilities. In particular, the surviving $N8^{-N}$ term should not be described as coherence-driven or as evidence that noncommutativity is required. The Lindblad formulation is a realization of the same Markov-renewal process, not a different mechanism.

Likewise, the square-root eigenvalue splitting belongs to the analytic family of Laplace kernels for this fixed dynamics. The distinguished value $s=1$ is not a laboratory control setting. The exact physical statement is that the cumulative-count timing transform at that Laplace value contains the Jordan contribution displayed in Theorem~1; the nearby square-root branches describe the local analytic structure of the transform kernel.

\section{Discussion}

The main point of this paper is a small but concrete distinction between spectral defectivity and a directly observed cumulative-count signature. In the three-state process above, $R_1$ has a genuine length-two Jordan block at $1/8$, and the physically ordinary terminal-summed transform does not annihilate its nilpotent contribution. Instead,
\[
\mathbb E_1[e^{-\T_N}]
=\frac56\,2^{-N}
+\frac16\,8^{-N}
+\frac{N}{6}\,8^{-N}.
\]
All three terms are exact. The polynomial prefactor is subleading, but it is present at finite threshold without conditioning on the terminal phase.

The structural comparison explains why this is worth stating explicitly. Fixed post-count reset removes matrix defectivity from the count ladder altogether. A common total rate can leave a defective multi-phase ladder while making $\one$ a right eigenvector, so complete terminal summation sees only the scalar factor $q(s)^N$. The present phase-dependent rates do not impose that scalarization, and the exact coefficient calculation shows that the generalized mode is retained. None of these observations requires broad claims about the uniqueness or minimality of the mechanism.

The example can therefore serve as an exact benchmark for analytical or numerical treatments of first-passage kernels near non-diagonalizable points. A method that resolves only eigenvalues should recover the repeated value $1/8$ but may miss the polynomial factor. A method that resolves the full matrix power but reports only terminal-resolved entries may miss the additional question of whether the generalized mode survives marginalization. Here both checks are available in closed form: the finite-$N$ transform fixes the Jordan coefficient, while the local Puiseux expansion fixes the square-root splitting of the same kernel family.

A natural next problem is to characterize, for broader Markov-renewal families, when preparation and terminal summation annihilate or retain generalized eigenspaces of the one-count kernel. The present note does not attempt such a classification. Its role is more limited: it provides one exact positive example, two transparent cancellation mechanisms, and a monitored-Lindblad realization in which all calculations can be checked directly. That makes the visibility question concrete without conflating it with the already-standard matrix-power formalism.

In summary, a Jordan defect of a one-count first-passage kernel can survive the most natural terminal marginalization. For the explicit three-state state-dependent-rate process constructed here, the surviving term is exactly $N/(6\,8^N)$, and the repeated eigenvalue sits at a simple square-root branch point of the analytic Laplace-kernel family. The example is classical in mechanism, exact at finite threshold, and sufficiently simple to function as a reference calculation for future work on spectral structure in cumulative-count first-passage statistics.

\end{document}